\pdfoutput=1
\newif\ifArxiv
\Arxivtrue
\newif\ifFull
\Fulltrue
\ifArxiv
\documentclass[11pt]{article}
\usepackage{fullpage}
\else
\documentclass[10pt,journal,letterpaper,compsoc]{IEEEtran}
\fi
\usepackage{graphicx}
\usepackage{cite}
\usepackage{times}
\usepackage{latexsym}
\usepackage{algorithmic}
\usepackage{url}
\setlength{\pdfpagewidth}{8.5in}
\setlength{\pdfpageheight}{11in}

%
% Okay, call me a hacker, but I like this theorem format.
%
\makeatletter
\def\@begintheorem#1#2{\sl \trivlist \item[\hskip \labelsep{\bf #1\ #2:}]}
\def\@opargbegintheorem#1#2#3{\sl \trivlist
      \item[\hskip \labelsep{\bf #1\ #2\ #3:}]}
\makeatother

\newenvironment{proof}{\noindent{\bf Proof:}}{\hspace*{\fill}\rule{6pt}{6pt}\bigskip}
\newtheorem{theorem}{Theorem}

\begin{document}
\renewenvironment{proof}{\noindent{\bf Proof:}}{\hspace*{\fill}\rule{6pt}{6pt}\bigskip}

\title{Learning Character Strings via Mastermind Queries,
       with a Case Study Involving mtDNA}

\ifArxiv
\author{
Michael T. Goodrich
\thanks{Michael T.~Goodrich is with the Department
of Computer Science, Univesity of California, Irvine, CA 92697-3435.
\hfil\break
E-mail and web page: see
\texttt{http://www.ics.uci.edu/\char`\~goodrich}.%
}}
\else
\author{
Michael T. Goodrich, \IEEEmembership{Fellow,~IEEE}%
\IEEEcompsocitemizethanks{\IEEEcompsocthanksitem 
Michael T.~Goodrich is with the Department
of Computer Science, Univesity of California, Irvine, CA 92697-3435.
\hfil\break
E-mail and web page: see
\texttt{http://www.ics.uci.edu/\char`\~goodrich}.%
}}
\fi

\date{}

\maketitle 

\begin{abstract}

We study the degree to which a character string, $Q$,
leaks details about itself any time it engages in comparison protocols
with a strings provided by a querier, Bob, even if those protocols are
cryptographically guaranteed to produce no additional information other
than the scores that assess the degree to which $Q$ matches strings
offered by Bob.  We show that such scenarios allow Bob to play variants
of the game of Mastermind with $Q$ so as to learn the complete identity
of $Q$.  We show that there are a number of efficient implementations
for Bob to employ in these Mastermind attacks, depending on knowledge
he has about the structure of $Q$, which show how quickly he can
determine $Q$.  Indeed, we show that Bob can discover $Q$ using a
number of rounds of test comparisons that is much smaller than the
length of $Q$, under reasonable assumptions regarding the types of scores
that are returned by the cryptographic protocols and whether he can use
knowledge about the distribution that $Q$ comes from.
We also provide
the results of a case study we performed on a database of
mitochondrial DNA, showing the vulnerability of existing real-world DNA
data to the Mastermind attack.

\textbf{Keywords:} character strings,
Mastermind, mitochondrial DNA.
\end{abstract}

\section{Introduction}
\emph{Mastermind}~\cite{c-m-83,k-cmm-77} 
is a game played between two players---a \emph{codemaker} and a
\emph{codebreaker}---using colored pegs.
\ifArxiv
(See Figure~\ref{fig-mastermind}.)

\begin{figure}[hbt]
\begin{center}
\includegraphics[width=3in]{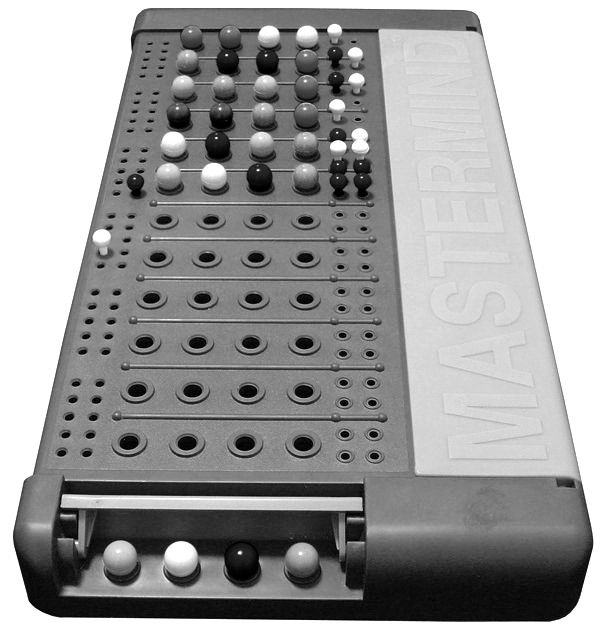}
\caption{The Mastermind game. The four large pegs in the middle are
used for guessing. The four smaller peg locations on the right are
used to score each guess---with black-peg and white-peg scores. And
the two pegs on the left are used to keep score across multiple games.
(This image is adapted from
{http://en.wikipedia.org/wiki/File:Mastermind.jpg}, by
User:ZeroOne, under the Creative Commons Attribution ShareAlike 2.0
License.)
}
\label{fig-mastermind}
\end{center}
\end{figure}
\fi

Viewed mathematically, Mastermind is abstracted as a game where the
codemaker selects a plaintext string\footnote{Throughout this paper we use
   the terms ``string,'' ``sequence,'' and ``vector'' synonymously.}, $Q$, 
of length $N$, whose elements are
selected from an alphabet of size $K$.
For consistency with the board game, the members of this alphabet are often
referred to as ``colors.''
The codemaker and codebreaker both know the values of $N$ and $K$,
and play consists of the codebreaker repeatedly making guesses, 
$V_1,V_2,\ldots$, about the identity of $Q$.
For each guess, $V_i$ the codemaker provides a score on how well
$V_i$ matches $Q$.
In \emph{double-count} Mastermind, which is the standard version based on the
board game, this score consists of a pair of two numbers:
\begin{itemize}
\item
A \emph{black} count, $b(Q,V_i)$, which is the number of elements in 
$V_i$ and $Q$
that match in both value and location. That is, 
\[
b(Q,V_i)=|\{j\colon\, V_i[j]=Q[j]\}|.
\]
\item
A \emph{white} count, $w(Q,V_i)$, which is the number of 
elements in $V_i$ that appear in $Q$ but in different locations than their
locations in $V_i$.
That is, letting $\pi$ denote an arbitrary permutation,
\[
w(Q,V_i) = \max_\pi|\{j\colon\, V_i[\pi(j)]=Q[j]\}|\,-\,b(V_i).
\]
\end{itemize}
In \emph{single-count} Mastermind, which has been less studied, the
codebreaker is given only the black count, $b(Q,V_i)$, for each guess, $V_i$.
(Note that it is impossible to solve the problem given only white-count
scores.)
The goal is for the codebreaker to discover $Q$ using a small a number of
guesses.

\subsection{Previous Related Work}
The original Mastermind game was
invented in 1970 by Meirowitz, as a board game having holes for 
vectors of length $N=4$ and $K=6$ colored pegs.
Knuth~\cite{k-cmm-77} subsequently showed 
that this instance of the Mastermind game can be solved in five guesses or less.
Chv{\'a}tal~\cite{c-m-83} studied the combinatorics of 
general Mastermind, showing that it
can be solved in polynomial time, in the $K\ge N$ case, 
using $2N \lceil\log K\rceil + 4N$ guesses, and
Chen {\it et al.}~\cite{cch-fhcaq-96} showed how this bound can be
improved, in this case, to 
$2N \lceil\log N\rceil + 2N + \lceil K/N\rceil + 2$ guesses.
Stuckman and Zhang~\cite{sz-mnc-05} showed that 
is NP-complete to determine if a sequence of guesses and responses in
general double-count Mastermind is satisfiable.
Goodrich~\cite{g-oacmg-09} shows that single-count (black-peg)
Mastermind satisfiability is NP-complete and that a specific vector
$Q$ can be guessed using a single-count (black-peg) query vector
that is of length
$N\lceil \log K\rceil + \lceil(2-1/K)N\rceil + K$.

Several researchers have 
explored privacy-preserving data querying
methods that can be applied to
character strings (e.g., see~\cite{akd-spsc-03,da-smc-01,FNP04}).
In particular, Atallah {\it et al.}~\cite{akd-spsc-03} and
Atallah and Li~\cite{al-sosc-05}
studied
privacy-preserving protocols for edit-distance string comparisons, such as
in the longest common subsequence (LCS) 
problem~\cite{h-lsacm-75,ir-acvlc-08,uah-bclcs-76},
where each party learns the
score for the comparison, but neither learns the contents of the string of
the other party.
Such comparisons are common in DNA sequence alignment comparisons, for
example.
Troncoso-Pastoriza {\it et al.}~\cite{tkc-pperd-07}
described a privacy-preserving protocol for searching for a certain
regular-expression pattern in a DNA sequence.
In last-year's Oakland conference, Jha {\it et al.}~\cite{jks-tppgc-08} give
privacy-preserving protocols for computing edit distance 
similarity scores between two genomic sequences, improving
the privacy-preserving edit distance algorithm of
Szajda {\it et al.}~\cite{spol-tpdp-06}.
Single-count matching results between two strings can be done in a
privacy-preserving manner, as well, using privacy-preserving set
intersection, e.g., using the method of 
Freedman {\it et al.}~\cite{FNP04},
Vaidya and Clifton~\cite{vc-ssica-05} or
Sang and Shen~\cite{ss-ppsip-07,ss-ppsib-08}.
The string matching problem can also be done using privacy-preserving dot
product computations~\cite{ae-nepps-07} 
or even general multi-party computation protocols
(e.g., see~\cite{dfknt-uscfm-06,gmw-hpamg-87,y-psc-82})
or systems~\cite{bnp-fssmp-08}.
Jiang {\it et al.}~\cite{jmcs-sddli-08} study a secure mulitparty
method for comparing a genomic sequence against every sequence in a
genomic database, providing a score indicating the match strength
between the query sequence and each sequence in the database.

In terms of the framework of this paper, the closest previous work is that of
Du and Atallah~\cite{da-psrda-01}, who studied a privacy-preserving protocol
for querying a string $Q$ 
in a database of strings, $D$, where comparisons are based on approximate
matching (but not sequence-alignment).
Their protocols assume that the parties are honest-but-curious,
however, so that, for instance, the database owner 
cannot introduce fake strings in
his database whose intent is to discover the identity of the query string,
$Q$. The attack model we explore in this paper, on the other hand,
allows for ``cheating'' in the comparison protocol, so that $D$ can introduce
strings whose sole purpose is to help him discover the identity of $Q$.

\subsection{Attack Scenarios}
In this paper we study the 
\emph{Mastermind attack} on string data, which is a way
that a genomic querier, Bob, can ``play'' a type of
Mastermind game with an unknown string, $Q$--for which $Q$'s owner,
Alice, thinks that she is comparing with Bob in a privacy-preserving 
manner---but instead Bob is discovering the full identity of $Q$.

The attack scenario is that
Alice repeatedly participates in privacy-preserving comparisons of
$Q$ to iteratively compare $Q$ with strings provided by Bob.
All is learned from each comparison is the score
measuring the similarity of the two strings ($Q$ and a string $V_i$
provided by Bob), with the score for each string
comparison being revealed to Bob (and possibly also Alice)
before the next comparison begins.
Bob's goal is to learn the complete identity of $Q$ with a reasonably small 
of comparisons.

We distinguish two versions of this attack scenario.
In the first scenario, the comparison between $Q$ and each string
$V_i$ provided by Bob is scored according to the single-count 
(black-peg) straight-match score,
\[
b(Q,V_i)=|\{j\colon\, V_i[j]=Q[j]\}|.
\]
In our second scenario, which is more common in genomic databases,
the comparison between $Q$ and each $V_i$ provided by Bob is scored
according to a sequence-alignment score,
\[
a(Q,V_i) = |\{(j,k)\in {\cal I}\colon\, V_i[j]=Q[k]\}|,
\]
where $\cal I$ is an ordered index set of pairs 
of integers so that if $(j,k)$ appears before $(l,m)$ in $\cal I$,
then $j<l$ and $k<m$.
This is also known as the \emph{longest common subsequence} 
(LCS)~\cite{h-lsacm-75,ir-acvlc-08,uah-bclcs-76} score between $Q$ and $V_i$.
(See Figure~\ref{fig:matches}.)
Incidentally,
as we observe below,
Levenshtein edit distance scores are strongly related 
to the LCS score, and our attack scenarios should be able to be
translated to this other measure, as well.

\begin{figure}[hbt]
\begin{center}
\includegraphics[width=3.3in]{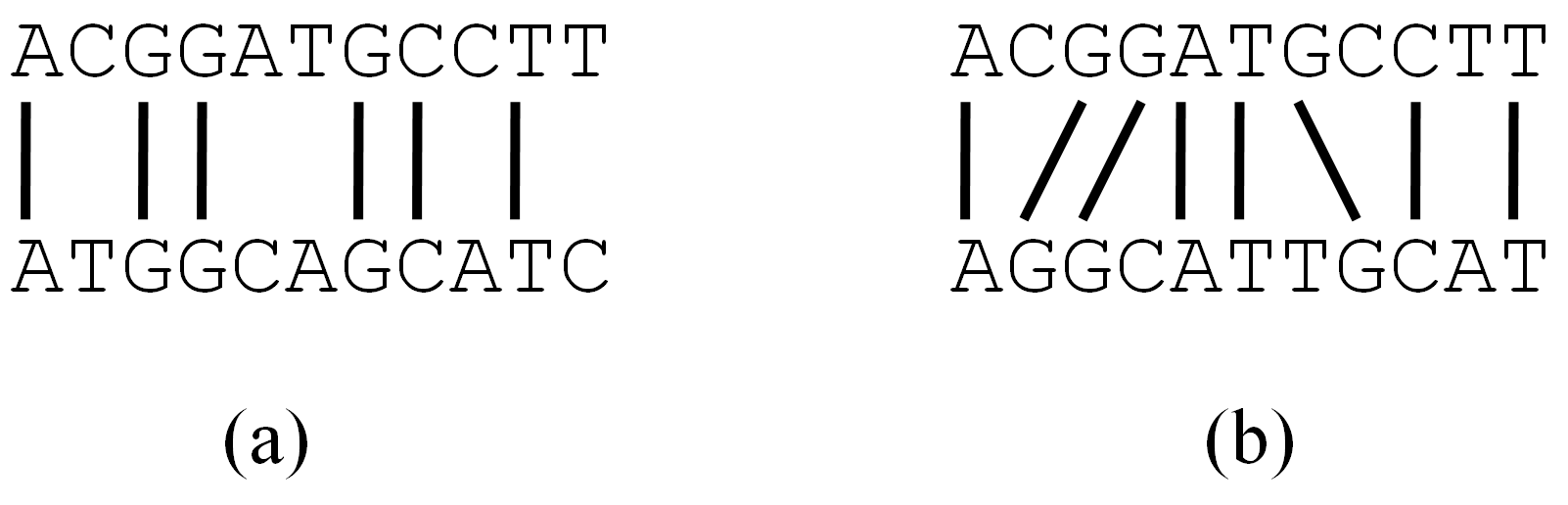}
\end{center}
\caption{Illustrating two types of matches between two DNA sequences.
(a) A single-count
(black-peg) straight-match. Note that the second
``A'' in the bottom string is not matched, since it doesn't line up
exactly with the second ``A'' in the top string.
(b) A sequence-alignment match. In going from the
top string to the botttom string, the
first ``C'' in the top string corresponds to a \emph{deletion} event,
the first ``C'' in the bottom string corresponds to an
\emph{insertion} event, and the penultimate characters in each string
correspond to a \emph{substitution} event.
}
\label{fig:matches}
\end{figure}

There are a number of motivating usage environments that could be
susceptible to Mastermind attacks.
For example, Bob could be a genomic database owner, storing 
genomic strings for a number of individuals, and Alice could be a
database user who is searching Bob's database to find the closest
match to a string $Q$ of interest.
Bob could, for instance, be the owner of a database of DNA from every
male attending a certain university and 
Alice could be an FBI agent searching through that database for a
match with DNA evidence gathered after a sexual assault.
Both parties in this example are likely to be under legal
restrictions not to reveal the complete identity of their strings
unless there is a match.
In another example, Alice could be the owner of a database
of genomic sequences and Bob could be an attacker trying to learn the
identity of a string $Q$ in Alice's database, e.g., which Bob can identify
only by an anonymized index, $j$.
In this case, Bob repeatedly does queries with each of his strings, $V_i$,
indexing into Alice's database using the name ``$j$'' 
to locate $Q$ and get Alice to do a privacy-comparison of $Q$ with $V_i$.
Bob could, for instance, be an employer trying to learn
the genomic sequence of a prospective employee, Charlie, by querying a
university DNA sequence database owned by Alice, which he could query
simply knowing the index of Charlie's DNA in Alice's database (e.g.,
Bob might be able to infer this index from Charlie's student
number).
In every case, Bob gets to ask Alice to compare her string, $Q$, to
each of his query strings, $V_i$, in a privacy-preserving manner
until these comparisons have leaked enough information that he can
easily infer the identity of $Q$.

\subsection{Our Results}
In this paper we study various aspects of the Mastermind attack, deriving the
following results.
\begin{itemize}
\item
We show that the problem of determining whether 
a sequence of Mastermind responses has a valid solution
is NP-complete even if each response is a sequence-alignment response.
\end{itemize}
At first, this might seem to provide some security for the privacy 
of the unknown string, $Q$, for it implies a degree of intractability to the
problem of learning a query
string $Q$ just from Mastermind responses involving $Q$.
Unfortunately, as was learned
with Knapsack cryptosystems~\cite{o-rfkc-90},
having the security
of a system be based on the difficulty of solving an NP-complete 
problem is no guarantee that it is safe in practice.
Indeed, such is the case for the security of genomic sequences being
susceptible to the Mastermind attack.
We show that character strings can be discovered by
surprisingly short sequence of guesses.
In particular, we also provide the following results:
\begin{itemize}
\item
We show that an arbitrary
query string, $Q$, of length $N$ from an alphabet of size $K$, can be
discovered with 
$(N+2)K$ queries, each of which reports the result of a 
sequence-alignment (LCS) test.
Such queries are common in genomic applications.
We also show that this bound can be further improved if the distribution of
characters in the alphabet follows Zipf's Law~\cite{n-plpdz-05}.
\item
We show how a Mastermind attacker can take advantage of
known distributional information for genomic data.
Armed with distributional knowledge about a query string, $Q$,
with respect to a reference string, $R$,
such as the Revised Cambridge Reference Sequence,
rCRS (GenBank accession number: AC 000021), 
the Mastermind attacker can discover $Q$ much quicker than in the
general cases, using either single-count or sequence-alignment responses.
\item
We provide experimental analysis of the distribution-based 
Mastermind attack for genomic data,
showing that, for a case study involving mitochondrial DNA (mtDNA),
either single-count responses or
sequence-alignment responses, the attack works surprisingly well. Given the
relative abundance of mtDNA data, and its ethnic sensitivity, we focus our
experiments on 
1000 human mtDNA sequences, showing that most can be discovered with
a Mastermind attack of just a few hundred guesses, even though mtDNA
sequences are typically over 16,500 bp long.
Given that current mtDNA databases
already have thousands of members
(e.g., see~\cite{brb-gppd-05}),
this experimental analysis shows that it
would be relatively easy for an attacker, Bob,
to interleave an undetected Mastermind attack
with privacy-preserving responses to actual sequences.
\end{itemize}

We conclude by discussing some of the issues that would have to be addressed in
order to defeat Mastermind attacks on genomic data, as well
as some possible directions for future research.

\section{Alternative Sequence Comparison Scores}
Throughout this paper, we assume that the attacker, Bob, can learn
the value of either a straight-match score, $b(Q,V_i)$, 
or a sequence-alignment score, $a(Q,V_i)$, 
between the unknown string, $Q$,
and each of his given strings, $V_i$.
These are not the only types of scores of interest with respect to
genomic data, however. 
So, before we discuss the privacy risks of genomic data 
from Mastermind attacks that use the $b$ or $a$ functions as scores,
let us discuss two 
other kinds of score functions and how they could alternatively
be used for similar attacks.

There are a number of score functions that measure 
the similarity between two strings.
We review two here, including how they can be reduced to 
similarity measures using the functions $b$ or $a$, for
comparing two strings, $Q$ and $V$.
\begin{itemize}
\item
\emph{Hamming distance:} the
Hammming distance, $H(Q,V)$, between $Q$ and $V$, is given by
\[
H(Q,V)=|\{j\colon\, V[j]\not=Q[j]\}|.
\]
That is, the two strings $Q$ and $V$ are
aligned in way that disallows insertions and deletions, and a score
is computed based on the number of substitutions needed to convert
$Q$ to $V$.
Note that, given a Hamming distance score, $H(Q,V)$,
we can compute a straight-match score as $b(Q,V) = |Q|-H(Q,V)$.
\item
\emph{Levenshtein distance:} the
Levenshtein distance, $L(Q,V)$, between $Q$ and $V$,
which is a kind of \emph{edit distance},
is the minimum number of insertions, deletions, and substitutions
needed to convert $Q$ into $V$ (or vice versa).
Note that, given a Levenshtein distance score, $L(Q,V)$, we can
compute a sequence-alignment score as
\[
a(Q,V) = \frac{|Q|+|V|-L(Q,V)}{2} .
\]
\end{itemize}
Thus, the Mastermind attacks we mention in this paper apply equally
well to systems that support string comparisons using Hamming
distance or Levenshtein distance.

\section{NP-Completeness of Sequence-Alignment Mastermind Satisfiability}
As mentioned above,
Stuckman and Zhang~\cite{sz-mnc-05} show that double-count Mastermind 
satisfiability is
NP-complete
and Goodrich~\cite{g-oacmg-09} shows that single-count (black-peg)
Mastermind satisfiability is also NP-complete (which applies equally
well for Hamming distance).

In the Sequence-Alignment Mastermind Satisfiability problem, we are given a
collection of Mastermind queries, $V_1,V_2,\ldots,V_N$, and the 
responses, $a(Q,V_1),a(Q,V_2),\ldots,a(Q,V_N)$, each of which is said to report the
sequence-alignment (LCS) score between each $V_i$ and an unknown vector, $Q$.
We are asked to determine if there indeed exists a vector 
$Q$ that satisfies all of these responses.

\begin{theorem}
\label{thm:np}
Sequence-Alignment Mastermind Satisfiability 
is NP-complete.
\end{theorem}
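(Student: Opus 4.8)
The plan is to prove both containment in NP and NP-hardness. Containment is routine: the unknown vector $Q$ (whose length is fixed by the instance) is itself a polynomial-size certificate, and for each query $V_i$ the value $a(Q,V_i)$ is just the length of a longest common subsequence of $Q$ and $V_i$, which the textbook dynamic program computes in time $O(|Q|\,|V_i|)$; we accept iff every computed value matches the claimed response. So all the work is in the hardness direction.

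For hardness I would reduce from a known NP-complete Mastermind variant, the most convenient being single-count (black-peg) Mastermind Satisfiability, shown NP-complete by Goodrich~\cite{g-oacmg-09} (one could equally start from the classical problem underlying that proof). The central idea is to \emph{rigidify} the strings so that an LCS score is forced to degenerate into a position-by-position straight-match count. Given a black-peg instance with queries $V_1,\dots,V_m$ of length $N$ over a $K$-color alphabet and claimed black-peg scores $r_1,\dots,r_m$, I enlarge the alphabet by a block of brand-new ``rivet'' symbols and interleave one rivet between every two consecutive positions. On the intended solution every rivet lines up with its unique twin, the optimal alignment is pinned to the diagonal, and the sequence-alignment score equals $b(Q,V_i)$ plus a fixed additive constant equal to the number of rivets; hence each black-peg query becomes a sequence-alignment query with a correspondingly shifted target score. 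To make the reduction sound, the unknown string must be pinned to this rivet-padded shape, and for that I add a handful of gadget queries: a ``skeleton'' query built from the rivet symbols in order, whose LCS target (equal to its own length) forces the rivets to occur, in order, as a subsequence of $Q$; an exact length bound on $Q$; and, if needed, a few further gadget queries constraining how often each rivet symbol may occur.

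The easy direction is immediate --- the rivet-padding of any black-peg solution satisfies all the shifted sequence-alignment responses. The substance of the proof, and the step I expect to be the main obstacle, is the converse: arguing that \emph{every} string consistent with all the sequence-alignment responses is in fact a rivet-padded black-peg solution. The difficulty is precisely that LCS tolerates insertions and deletions, so a malformed candidate could try to crowd several data symbols into one inter-rivet gap while leaving another gap empty, exploiting indels to recover the prescribed scores; the skeleton query, the length bound, and the multiplicity gadgets are exactly what is needed to kill this possibility, and getting that bookkeeping tight is the delicate part. The final ingredient is an exchange argument: once a candidate is known to carry the rivet skeleton with the right multiplicities, one shows that matching the entire rivet skeleton is without loss of generality part of some optimal alignment against each query, which collapses the LCS back to a positionwise count and forces one data symbol into each gap. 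With that structural lemma in hand, consistency with the shifted targets becomes equivalent to consistency with the original black-peg responses, and NP-hardness follows.
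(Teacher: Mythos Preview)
Your route is genuinely different from the paper's. The paper does \emph{not} reduce from black-peg Mastermind Satisfiability; it reduces directly from 3-Dimensional Matching, essentially re-running Goodrich's 3DM$\to$black-peg construction but with a single repeated separator colour $\mu$ interleaved between all ``real'' positions, together with a null colour $\phi$. Four enforcer queries (all-$\phi$ with score $0$; all-$\mu$ with score $h=3n+m-1$; and two count queries on the $T$-block) pin down the gross structure, and each 3DM triple becomes a three-query ``chooser'' gadget whose LCS target is $h+1$. So both arguments rest on the same rigidification insight---pad with separators so that an optimal alignment is pushed onto the diagonal and the LCS degenerates into a positionwise count---but the paper uses one repeated separator plus two new colours, whereas you use $\Theta(N)$ pairwise distinct rivets. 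Your version is more modular (black-peg hardness is invoked as a black box, so any strengthening there would transfer), at the price of a much larger alphabet; the paper's version keeps the alphabet at $m+2$ but has to redo the 3DM gadgetry inside the LCS setting.

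One caution on your sketch. The sentence claiming that the exchange argument ``collapses the LCS back to a positionwise count and forces one data symbol into each gap'' runs together two different things. With distinct rivets appearing once and in the same order in both strings, it is indeed true that some optimal LCS matches the full rivet skeleton; that part is fine. But that lemma does \emph{not} force the data symbols of the unknown $Q'$ to sit one-per-gap: your skeleton query, length constraint, and multiplicity gadgets pin down the rivet subsequence and the total length, yet still allow, say, two data symbols crowded into gap $j$ and none in gap $j{-}1$. For such a malformed $Q'$, the rivet-respecting LCS against $V_i'$ is $(N{+}1)$ plus a sum over gaps that is \emph{not} the black-peg count $b(Q,V_i)$ for any $Q$. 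You therefore need one more ingredient---either an additional family of gadget queries that directly bounds each gap to size $1$, or a separate argument that any malformed $Q'$ meeting all the shifted targets can be flattened to a well-formed one that still meets them. This is exactly the ``delicate bookkeeping'' you flag, but as written the proposal does not yet contain the idea that closes it; the rest of the plan is sound.
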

\begin{proof}
Our proof is an adaptation of the NP-completeness proof of 
Goodrich~\cite{g-oacmg-09} showing that single-count (black-peg)
Mastermind Satisfiability is NP-complete.
It is easy to see that Sequence-Alignment Mastermind Satisfiability is in NP.
For example, we could nondeterministically guess a vector $Q$ and then test
in polynomial time whether it satisfies all the responses,
$a(Q,V_1),a(Q,V_2),\ldots,a(Q,V_N)$.

To prove that Sequence-Alignment Mastermind Satisfiability is NP-hard,
we provide a reduction from 3-Dimensional Matching (3DM),
which is a well-known NP-complete problem
(e.g., see~\cite{gj-cigtn-79}).
In the 3DM problem, we are given three sets, 
$X=\{x_1,\ldots,x_n\}$, $Y=\{y_1,\ldots,y_n\}$, 
and $Z=\{z_1,\ldots,z_n\}$, of $n$
elements each.
In addition, we are given a set $T$ of $m$ triples, 
$\{(x_{i_1},y_{j_1},z_{k_1}),\ldots,
(x_{i_m},y_{j_m},z_{k_m})\}$, whose elements are respectively taken from the
three sets, $X$, $Y$, and $Z$.
The problem is to determine if there is a subset of triples such that each
element in $X$, $Y$, and $Z$ appears in exactly one triple in 
this subset.

Suppose, then, that we are given an instance of the 3DM problem, as described
above.
We consider the unknown vector, $Q$, to consist of the following vector of
variables:
\[
(X_1,\ldots,X_{2n};\,Y_1,\ldots,Y_{2n};\,Z_1,\ldots,Z_{2n};
\,T_1,\ldots,T_{2m-1}),
\]
where the semi-colons are used for the sake of notation to separate the four
sections in the unknown vector, $Q$.
We perform our reduction by constructing a sequence of guess vectors,
$V_0,V_1,\ldots,V_N$, together with their sequence-alignment responses, 
$a(Q,V_0),a(Q,V_1),\ldots,a(Q,V_N)$,
so that there is a satisfying vector $Q$ for these responses if and only if
there is a solution to the given instance of the 3DM problem.

Our construction begins by setting the number of colors, $K$, to be $m+2$.
Intuitively, there is a color associated with each triple in $T$, 
plus a ``null'' color, $\phi$, which is guaranteed to appear nowhere in our
unknown vector, $Q$, and a separator color, $\mu$, which occurs in every
other (even-indexed) position of $Q$.
We begin our sequence of queries
with four special ``enforcer'' queries.
The first two of these are
\[
V_0 = (\phi,\ldots,\phi;\,
\phi,\ldots,\phi;\,
\phi,\ldots,\phi;\,
\phi,\ldots,\phi),
\]
which has response $a(Q,V_0)=0$, and
\[
V_1 = (\mu,\ldots,\mu;\,
\mu,\ldots,\mu;\,
\mu,\ldots,\mu;\,
\mu,\ldots,\mu),
\]
which has response $a(Q,V_1)=3n+m-1$.
Intuitively, $V_0$ enforces the fact that the null color, $\phi$, 
appears nowhere in the unknown vector,
and $V_1$ enforces the fact that the separator color, $\mu$, appears exactly 
often enough to separate every other (non-$\mu$) character in the 
unknown vector.
So as to better understand the characteristics of the other queries, let us
set $h=3n+m-1$, the number of $\mu$ colors in our unknown vector $Q$.
We then define two additional enforcer queries,
\begin{eqnarray*}
V_2 &=& (\phi,\mu,\ldots,\mu,\phi,\mu;\,
\phi,\mu,\ldots,\phi,\mu; \\
&& \phi,\mu,\ldots,\phi,\mu;\,
1,\mu,1,\mu,\ldots,\mu,1),
\end{eqnarray*}
which has response $a(Q,V_2)=h+n$, and
\begin{eqnarray*}
V_3 &=& (\phi,\mu,\ldots,\mu,\phi\mu;\,
\phi,\mu,\ldots,\mu,\phi\mu; \\
&& \phi,\mu,\ldots,\mu,\phi\mu;\,
0,\mu,0,\mu,\ldots,\mu,0),
\end{eqnarray*}
which has response $a(Q,V_3)=h+m-n$.
Intuitively, 
$V_2$ enforces a counting rule that exactly $n$ of the $T_i$'s will be set to
$1$, and $V_3$ 
enforces a counting rule that the remaining $m-n$ of the $T_i$'s will be set to
$0$.
For each triple, $T_s=(x_{i_s},y_{j_s},z_{k_s})$, we
construct three query vectors, as follows.

\medskip
\noindent
$V_{s,1} = $ 

\smallskip
\noindent
$(\phi,\mu,\ldots,\mu,\phi,\mu,s,\mu,\phi,\mu,\ldots,\mu,\phi,\mu;\,
\phi,\mu,\ldots,\mu,\phi,\mu;$ \hfil\break
$\phi,\mu,\ldots,\mu,\phi,\mu;\,
\phi,\mu,\ldots,\mu,\phi,\mu,0,\mu,\phi,\mu,\ldots,\mu,\phi)$,

\bigskip
\noindent
where the $s$ is in position $2i_s-1$ in the first group and 
the $0$ is in position $2s-1$ in the fourth group.
This vector has response, $a(Q,V_{s,1})=h+1$.

\medskip
\noindent
$V_{s,2} = $ 

\smallskip
\noindent
$( \phi,\mu,\ldots,\mu,\phi,\mu;\,
\phi,\mu,\ldots,\mu,\phi,\mu,s,\mu,\phi,\mu,\ldots,\mu,\phi,\mu;$ \hfil\break
$\phi,\mu,\ldots,\mu,\phi,\mu;\,
\phi,\mu,\ldots,\mu,\phi,\mu,0,\mu,\phi,\mu,\ldots,\mu,\phi)$,

\bigskip
\noindent
where the $s$ is in position $2j_s-1$ in the second group and 
the $0$ is in position $2s-1$ in the fourth group.
This vector has response, $a(Q,V_{s,2})=h+1$.

\medskip
\noindent
$V_{s,3} = $ 

\smallskip
\noindent
$( \phi,\mu,\ldots,\mu,\phi,\mu;\,
\phi,\mu,\ldots,\mu,\phi,\mu;$ \hfil\break
$\phi,\mu,\ldots,\mu,\phi,\mu,s,\mu,\phi,\mu,\ldots,\mu,\phi,\mu;$ \hfil\break
$\phi,\mu,\ldots,\mu,\phi,\mu,0,\mu,\phi,\mu,\ldots,\mu,\phi)$,

\bigskip
\noindent
where the $s$ is in position $2k_s-1$ in the third group and 
the $0$ is in position $2s-1$ in the fourth group.
This vector has response, $a(Q,V_{s,3})=h+1$.
Intuitively, these three responses collectively form a ``chooser'' gadget,
where we will either have $T_{2s-1}=0$ or the three variables
$X_{2i_s-1}$,
$Y_{2j_s-1}$,
and
$Z_{2k_s-1}$,
will each be set to have color $s$ 
(and $T_{2s-1}=1$).
Moreover, note that there are $m$ odd-index positions in the $T$, and each of
them has to match either a $0$ or $1$ color.

This reduction can clearly be done in polynomial time. So all that remains 
is for us 
to show that it works.
Suppose, then, that there is a possible solution to the given
instance of 3DM. Then for each chosen triple, 
$T_s=(x_{i_s},y_{j_s},z_{k_s})$, we can assign colors
$T_{2s-1}=1$, $X_{2i_s-1}=s$, $Y_{2j_s-1}=s$, and 
$Z_{2k_s-1}=s$,
which will satisfy each of the $V_{s,1}$, $V_{s,2}$, and $V_{s,3}$ vector
responses for this value of $s$.
Likewise, setting $T_{2s-1}=0$ will 
satisfy each of the $V_{s,1}$, $V_{s,2}$, and $V_{s,3}$ vector
responses for a triple $T_{2s-1}$ that is not chosen.
Finally, given that there are $n$ chosen vectors, we will satisfy the four
preliminary vector responses as well.

Suppose, alternatively, that we have a vector $Q$ that satisfies all 
our vector responses.
We know that each $X_i$, $Y_j$, and $Z_k$ must be assigned a color other than
$\phi$. 
Moreover, every even-indexed position in $Q$ must be assigned the color $\mu$
and every odd-indexed position must be a color other than $\mu$, because
there are exactly $h=3n+m-1$ instances of $\mu$
in $Q$ and we have introduced a query that enforces the fact that there is
exactly one non-$\mu$ color between every consecutive pair of $\mu$-colored
positions.
Since there are only $m+2$ colors, this implies each 
odd-indexed position
$X_{2i-1}$, $Y_{2j-1}$, and $Z_{2k-1}$ 
must be assigned a color corresponding to a triple
number, $s$, that is, it is not assigned $\phi$ or $\mu$.
If the corresponding $T_{2s-1}=1$, then in order to have satisfied the vectors
$V_{s,1}$, $V_{s,2}$, and $V_{s,3}$,
we must have set
$X_{2i_s-1}=s$, $Y_{2j_s-1}=s$, and $Z_{2k_s-1}=s$,
which implies we can include the triple
$(X_{i_s},Y_{j_s}Z_{k_s})$ in our matching.
If $T_{2s-1}=0$, then we do not include this triple in our matching.
By the vector responses $V_2$ and $V_3$, we know that the number of 
triples chosen in this way is 
exactly $n$. Thus, we have found a valid 3-dimensional matching.
\end{proof}

Thus, it is extremely unlikely
that we will be able to find a polynomial-time algorithm that can
always satisfy arbitrary Mastermind sequence-alignment query strings, 
or even single-count queries~\cite{g-oacmg-09}.
Unfortunately, this is not the same as a 
guarantee of security for the kinds of query strings
that would result from an interaction between a Mastermind attacker,
Bob, and a character string owner, Alice, where Bob is trying to learn
Alice's string, $Q$, through a sequence of privacy-preserving string
comparisons. 
For we show, in the sections that follow, that such query
strings, $Q$, can be discovered fairly efficiently using the Mastermind attack.

\section{The Mastermind Attack for Sequence-Alignment Queries}
\label{sec:align}
Recall that in a
\emph{sequence-alignment} query
we wish to compare two strings $Q$ and $V$, where
the score for a match is the length of the longest common subsequence
(LCS)~\cite{h-lsacm-75,ir-acvlc-08,uah-bclcs-76} between $Q$ and $V$.
Several researchers have studied this problem and have come up with
privacy-preserving protocols to determine such scores
(e.g., see~\cite{akd-spsc-03}).
In this section, we show that performing such a series of 
sequence-alignment queries
with Bob is susceptible to a type
of Mastermind attack of its own.

Suppose we are given an unknown string $Q$ of length $N$ over an
alphabet of size $K$, the members of which we call ``colors.''
Suppose further that we are going to engage in a protocol with Bob
to test $Q$ against strings provided by Bob,
where each test returns the length of a longest common subsequence
between $Q$ and one of Bob's strings.
That is, we score matches using the sequence-alignment scoring
function, $a(Q,V)$, for a guess vector $V$, 
which is the length of
a longest common subsequence between $V$ and $Q$.
We are interested in this section on studying an efficient scheme for
Bob to discover $Q$ using this query scheme.

A Mastermind-attack algorithm for Bob begins as follows:
\begin{itemize}
\item
Bob begins by guessing $K$ vectors, $V_1,V_2,\ldots,V_K$, with 
each vector $V_i$ consisting of elements of all the same color, $i$.
\end{itemize}

The subsequence alignment score for each of the initial guesses 
will tell Bob the cardinality of each color in $Q$.
Let us now imagine that we reorder the colors so that they are listed
$1$ to $K$ in nondecreasing order of how often they each appear in $Q$.
Thus, color $1$ is now the least frequent color in $Q$ and $K$ is the
most frequent color.
Our algorithm continues by incrementally building up a vector $W$,
such that $W$ either completely matches all its characters with $Q$
(in the specified order) or it misses by just one character.
Initially, we set $W$ to be a vector consisting of exactly $c_1$ elements
of color $1$, so that if we were to guess $W$,
then we would get a score of $a(Q,W)=c_1$.
We allow indexing and insertion into $W$ so that 
we can add a character before the $i$th element in $W$ for $i=1$ to
$|W|+1$ (with an insertion ``before'' position $|W|+1$ taken to mean
an insertion just after position $|W|$, the last position in $W$).
Our algorithm for Bob's Mastermind attack continues shown in
Figure~\ref{fig:alg-mast}.

\begin{figure}[hbt]
\begin{algorithmic}[100]
\FOR[take each color in turn] {$k=2$ to $K$}
  \STATE
  Set $i=1$ \{position in $W$ where to insert items\}
  \STATE Set $j=0$ \{count of number of items of color $k$ found\}
  \WHILE[find the places for color $k$] {$j<c_k$}
    \STATE Add a color $k$ item just before the $i$th item in $W$.
    \STATE Make a guess for $W$ to learn the value of $a(Q,W)$.
    \IF[all of $W$ matches] {$a(Q,W)=|W|$}
      \STATE Increment $i$ and $j$.
    \ELSE[there's one too many of color $k$ before $i$]
      \STATE Remove the color $k$ item before $i$.
      \STATE Increment $i$.
    \ENDIF
  \ENDWHILE
\ENDFOR
\end{algorithmic}
\caption{The sequent-alignment learning algorithm.}
\label{fig:alg-mast}
\end{figure}

Note inductively that, at the end of each iteration of the 
the while-loop, every character in $W$ matches in $Q$, that is,
$a(Q,W)=|W|$.
Thus, any time the if-statement finds that
$a(Q,W)\not=|W|$, then we have just added an item of color $k$ in a
place where it cannot match any item without causing a previously-matched
neighboring item to mis-match what it previously could match.
Therefore, in each iteration of the for-loop,
the algorithm correctly finds all the places where items of color $k$
fit with respect to items of colors $1$ to $k-1$.
So, when the algorithm completes, we have $W=Q$; that is, we have
learned $Q$.

Consider now the analysis of this algorithm. 
Note that in each iteration of the while-loop, we increment $i$, our
index into $W$, and that at the end of the while loop the length of
$W$ is $c_1+c_2+\cdots+c_k$, where $k$ is the index of the for-loop.
Thus, the total number of queries made is at most
\[
K + \sum_{i=1}^{K}\ \sum_{j=1}^{i} c_j,
\]
which is the same as
\[
K + \sum_{i=1}^{K} (K-i+1)c_i,
\]
since each term $c_i$ appears $K-i+1$ times in the double sum.
Let us perform a substitution of variables, where we let
$d_1,d_2,\ldots, d_K$ denote the cardinalities of the colors in $Q$
in nonincreasing order, so $d_1$ is the most frequent color and $d_K$
is the least frequent.
Then we can rewrite the total number of queries performed to be bounded by
\[
K+ \sum_{i=1}^{K} id_{i}.
\]
Note that, by definition, $d_{i}\le N/i$, for otherwise, $d_{i}$
could not be the $i$th largest-cardinality color.
Thus, the total number of queries is at most
\begin{eqnarray*}
K+ \sum_{i=1}^{K} i(N/i) &=& K+KN \\ 
			      &=& (N+1)K.
\end{eqnarray*}
This is the number of tests done by Bob, the Mastermind attacker,
making no additional assumptions about the distribution of colors in
the query string, $Q$.

This analysis can be refined, however,
if the colors are distributed in $Q$ according to Zipf's
Law~\cite{n-plpdz-05}, which in this context would imply that 
\[
d_i \le \frac{N}{i^s H_{N,s}} ,
\]
where $H_{N,s}$ is the $N$-th Harmonic number of order $s$,
\[
 H_{N,s} = \sum_{i=1}^N 1/i^s,
\]
and $s$ is between $1$ and $2$, inclusive.
In this case, the total number of guesses done by Bob would be
at most
\begin{eqnarray*}
K + \sum_{i=1}^{K} \frac{iN}{i^sH_{N,s}} &\le& K+\frac{KN}{H_{N,s}},
\end{eqnarray*}
for $s\ge 1$.
Thus, we have the following:

\begin{theorem}
\label{thm:seq-align-q}
Given an unknown length-$N$ string $Q$, defined on an alphabet of
size $K$, a malicious Mastermind attacker can discover 
$Q$ in polynomial time using 
$(N+1)K$ sequence-alignment tests
tests against $Q$, each of which reveals only the length of a longest
common subsequence between
$Q$ and the test string match.
If the cardinalities of elements of $Q$ follow 
Zipf's Law, with parameter $s\ge 1$, 
then a malicious Mastermind attacker can discover $Q$
using at most $K+KN/H_{N,s}$ sequence-alignment tests.
\end{theorem}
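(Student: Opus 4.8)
The plan is to make Bob's attack explicit, prove it recovers $Q$ via a loop invariant, and then count the sequence-alignment tests. First I would have Bob issue the $K$ monochromatic guesses $V_1,\dots,V_K$, where $V_i$ is the all-$i$ string of length $N$; since a longest common subsequence of $Q$ with an all-$i$ string just counts the occurrences of color $i$ in $Q$, these $K$ tests reveal the whole color multiset of $Q$, say $c_i$ copies of color $i$. I would then relabel the colors $1,\dots,K$ in nondecreasing order of $c_i$, initialize a working string $W$ to $c_1$ copies of color $1$ (so $a(Q,W)=|W|=c_1$), and maintain the invariant that at the end of every pass of the inner loop $a(Q,W)=|W|$, i.e.\ every character of $W$ lies on a common subsequence with $Q$ in its current left-to-right order, and moreover $W$ equals the restriction of $Q$ to the colors processed so far.

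The heart of the argument is the correctness of the insertion procedure of Figure~\ref{fig:alg-mast}. Processing colors $k=2,\dots,K$ in turn, I would sweep an insertion pointer $i$ left to right through $W$; for each of the $c_k$ copies of color $k$ still to place, tentatively insert one copy just before position $i$ and query $a(Q,W)$. If the response equals $|W|$, the copy extends the matched subsequence, so it is kept and both $i$ and the placed-count $j$ advance; if the response is smaller, then — using the invariant, which says the rest of $W$ already matched — the copy cannot sit in that slot without displacing an already-matched neighbor, so it is deleted and $i$ advances past it. The one lemma I must nail down is exactly this dichotomy: after a single insertion into a fully-matched $W$ the score is either $|W|$ or $|W|-1$, and it is $|W|$ precisely when the inserted character fits in that slot relative to colors $1,\dots,k-1$ (together with the copies of color $k$ already placed). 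Granting that, a short induction on $k$ shows that after color $k$ is processed $W$ is the restriction of $Q$ to colors $1,\dots,k$, hence $W=Q$ when $k=K$. I expect this correctness lemma to be the only genuinely subtle point; the monotone sweep of $i$ and the resulting consistency of later placements is routine bookkeeping once the lemma is in hand.

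For the query count, note $i$ increments on every inner pass and that after color $k$ is processed $|W|=c_1+\cdots+c_k$, so color $k$ costs at most $c_1+\cdots+c_k$ inner queries. Adding the $K$ initial guesses gives a total of at most $K+\sum_{i=1}^{K}\sum_{j=1}^{i}c_j = K+\sum_{i=1}^{K}(K-i+1)c_i$. Re-indexing by the cardinalities $d_1\ge d_2\ge\cdots\ge d_K$ in nonincreasing order rewrites this as $K+\sum_{i=1}^{K} i\,d_i$; since $d_1,\dots,d_i$ each exceed $d_i$ and sum to at most $N$, we have $i\,d_i\le N$, so the total is at most $K+KN=(N+1)K$. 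For the Zipf refinement I would instead use $d_i\le N/(i^s H_{N,s})$, whence $\sum_{i=1}^{K} i\,d_i \le \frac{N}{H_{N,s}}\sum_{i=1}^{K} i^{1-s} \le \frac{KN}{H_{N,s}}$ for $s\ge 1$, giving the stated bound $K+KN/H_{N,s}$. Since every query string is built in polynomial time and the algorithm halts after this many of them, the theorem follows.
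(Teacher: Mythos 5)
Your proposal is correct and follows essentially the same route as the paper's own argument: the $K$ monochromatic guesses to learn the color cardinalities, the incremental construction of $W$ with the invariant $a(Q,W)=|W|$, and the identical count $K+\sum_{i=1}^{K}(K-i+1)c_i = K+\sum_{i=1}^{K} i\,d_i \le (N+1)K$, with the same Zipf refinement via $d_i\le N/(i^s H_{N,s})$. The insertion-step dichotomy you flag as the key lemma is exactly the point the paper also treats with an informal inductive remark, so your treatment matches the paper's level of rigor as well as its structure.
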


\section{Exploiting Data Distributions}
\label{sec:dist}
Up to this point, we have focused on how the
Mastermind attacker, Bob, could learn a general string $Q$ using the
types of queries typically asked of genomic databases, even
if those queries are privacy preserving.
In this section, we explore how Bob can significantly improve the
effectiveness of the Mastermind attack if he exploits information,
which is publicly available, about the distributions of
the character strings of interest.
Moreover, to drive the point home, we provide a case study showing
the effectiveness of such Mastermind attacks on a real-world
genomic database, in the section that follows.

Genomic sequences typically have a great deal of similarity. Indeed, recent 
compression schemes have shown that it is effective to view a genomic sequence
with respect to a compression scheme 
that represents a sequence in terms of its
differences with a reference sequence, $R$
(e.g., see~\cite{baldicompression07}).
That is, we can start from a reference sequence, $R$, 
which contains the most common components of a typical genomic sequence.
Then we define each other sequence, $Q$, in terms of its differences with $R$.
Each difference is defined by an index location, $i$, in $R$ and an operation
to perform at that location, such as a substitution, insertion, or deletion.

This difference pattern is present,
for example, in human mitochondrial DNA, which is the type of genomic
data we use in our case study. 
This type of of DNA, which, as we have already mentioned,
is inherited only through the maternal
line and is already available in sequenced form in sizeable enough
quantities to support obfuscated Mastermind attacks.
Moreover, because it is passed only though the maternal line, it
functions as a highly tuned notion of race, allowing researchers in
some cases to trace a person's ancestry to individual villages.
Thus, mitochondrial DNA is highly sensitive from a privacy-protection
viewpoint.

As shown in recent work of Baldi {\it et al.}~\cite{baldicompression07},
mitochondrial DNA 
sequences can be encoded in significantly-compressed form by using a
standard reference sequence~\cite{brandon04,ruiz07}.
This reference sequence, $R=\mbox{rCRS}$, is 16,568 bp long.
So, in terms of the notation used above, we have $N=16568$ and
$K=4$, since there are 4 types of base pairs possible.
But these parameters suggest that there is more variation in the data
than actually occurs. 

In fact, the vulnerability of DNA sequences to the Mastermind attack is much
worse than this in practice.
For example, there are a limited number of
locations along the reference
sequence where any changes appear statistically in the mitochondrial DNA data. 
So let us use $M$ to
denote the number of different possible locations where any query sequence 
might differ from the reference sequence, $R$.
Worse yet, from a privacy-preservation standpoint, the average number
of difference between any human DNA sequence and the
reference is orders of magnitude smaller than $M$ in practice.
(We explore these statistics in detail below.)
Here we show how a Mastermind attack can exploit these statistical 
properties of genomic data.

\subsection{The Substitution-Only Case}
\label{sec:deterministic}

In this section, we explore the version of the Mastermind
attack where the attacker, Bob, 
engages in a series of privacy-preserving protocols with Alice,
each of which reveals only the single-count straight-match score 
between Alice's string, $Q$, and strings provided by Bob, 
in an iterative online fashion (recall Figure~\ref{fig:matches}a).
In the attack model we consider,
Bob is allowed to use self-constructed sequences in comparisons with $Q$,
from which he learns the value of $b(Q,V_i)$ for each of his query strings,
$V_i$.

Given additional knowledge of the distributional properties
of DNA data, we can construct a Mastermind attack to take this
knowledge into consideration.
In this case, we make the assumption that the unknown string, $Q$, differs from
a reference string $R$ only through a relatively small
number substitutions, which is true
for example, for 45\% of the mitochondrial DNA data.
(We will explore the more general case later in this section.)

Our algorithm is an adaptation of an algorithm of Goodrich~\cite{g-oacmg-09} 
for solving the boardgame version of Mastermind 
to the specific case of a Mastermind
attack on a string $Q$ relative to a reference string $R$.

We begin the attack for Bob
by having him perform a query against $Q$ with a reference sequence, $R$.
For any string, $Q$, let $s(Q)$ denote the number of
substitutional differences $Q$ has with the reference sequence, $R$.
Note, then, that our first query
(for the reference string $R$ itself) allows us to 
determine the value of $s(Q)$, using the formula
\[
s(Q)=N-b(Q,R).
\]
For example, $R$ could be a genomic sequence derived from a sequencing
of the DNA of a specific reference human or it could be a canonical genomic
reference sequence derived from analyzing commonalities among a number
of human sequences.
Even though few humans
have presently had their complete genomes 
sequenced~\cite{venter01,humangenome01,levy07},
any of these could serve as a reference, $R$, for a Mastermind attack
on a complete genome sequence.
For the more wide-spread instances of mitochondrial DNA,
the Revised Cambridge Reference Sequence
(rCRS) (GenBank accession number: AC 000021) is commonly used as a 
mtDNA reference sequence~\cite{brandon08,ruiz07,brandon04}, and it
could serve as the sequence $R$ in a Mastermind attack on a
mitochondrial DNA sequence. 

Imagine that we cyclically order the $K$ characters
in our alphabet, so, for instance, if our alphabet is \{A,C,G,T\},
then we could use the cyclic ordering (A,C,G,T,A,C,G,T,$\ldots$).
Note that this ordering allows us to choose any character as a base
color, i.e., a ``color $0$,'' and then specify all other characters as 
offsets from that base.
For example, in the DNA case, we could pick ``C'' as the base, color
$0$, in which case ``G'' becomes color $1$, ``T'' becomes color $2$,
and ``A'' becomes color $3$.
Or we could pick ``T'' as the base, color $0$, in which case ``A''
becomes color $1$, ``C'' becomes color $2$, and ``G'' becomes color
$3$.

In the context of a Mastermind attack,
we consider each character, $R_i$, in the reference sequence, $R$, to be
color ``0'' for that position, $i$. 
Viewed Mathematically, we can then number the $K-1$ remaining
characters, according to our cyclic ordering, as offsets from these
respective color $0$'s.
Assuming that Bob's first guess, of $R$, is not a perfect match for
the query sequence, $Q$, then we can view Bob's remaining task as that
of determining the cardinality and location of all the non-zero
offset values for positions in $R$.
In fact, if we think of the characters in the respective positions of $R$ as
the respective color $0$'s for those positions, then we can view the
remaining task as that of 
determining the locations of the colors $0$ through $K-1$.

After Bob makes his initial guess using $R$, we then
have him perform $K-1$ additional queries, each of which is a vector of
elements that are all the same offset from $R$, i.e., a vector of all the
same ``colors'' with respect to $R$, but only at the $M$ places that are
statistically possible locations for a substitution.
Thus, let us assume we can view $Q$ as now consisting of just the $M$ places
where substitutions may occur (for the other locations we simply repeat a
guess for color $0$ every time).
This allows us to initially know the cardinality, 
$c_0,c_1,\ldots,c_{K-1}$, of every
(offset) color in the (compressed) unknown vector, $Q$.
If any $c_i=0$,
then we remove the color $i$ from our alphabet of colors, and update
the value of $K$ accordingly.
The remainder of Bob's computation proceeds as a recursive divide-and-conquer
algorithm, which is similar in structure to the approach
of~\cite{c-m-83,g-oacmg-09}.

The generic problem is to determine the offset values of all the elements in
a range $Q[l..r]$, which initially is the entire vector
$Q=Q[0..N-1]$, assuming we know the values of 
$c_0,c_1,\ldots,c_{K-1}$, of every
color in $Q[l..r]$, and each $c_i>0$.
If $K\le 1$, we are done; so let us assume without loss of generality
that $K\ge 2$.
In addition, we assume inductively that we know, $d$, the number of
instances of color $0$ outside of the range $Q[l..r]$. Initially, of
course, $d=0$.

Given this initial setup, we split
$Q[l..r]$ into $Q[l..m]$ and $Q[m+1..r]$, where $m$ is in the middle
of the interval $[l,r]$. 
The main challenge, then, is to provide for
$Q[l..m]$ and $Q[m+1..r]$
the same setup we had for $Q[l..r]$.
This setup can be accomplished by determining the cardinalities,
$x_0,x_1,\ldots,x_{K-1}$ and 
$y_0,y_1,\ldots,y_{K-1}$,
of every color that respectively appears in 
$Q[l..m]$ and $Q[m+1..r]$.
We do this with a series of $K-1$ additional queries, where we guess
that the elements in $Q[l..m]$ are of color $i$, for $i=1,2,\ldots,K-1$, and
that the rest of $Q$ is of color $0$.
Let the values of these queries be denoted as $b_1,b_2,\ldots,b_{K-1}$,
and note that, at this point, we know the following:
\begin{eqnarray}
\label{eq-1}
x_i + y_i = c_i, \mbox{\ \ \ for $i=0,1,\ldots, {K-1}$} \\
\label{eq-2}
x_i + y_1 = b_i-d, \mbox{\ \ \ for $i=1,2,\ldots,K-1$} \\
\label{eq-3}
x_0+x_1+\cdots+x_{K-1}= m-l+1.
\end{eqnarray}
Thus, we can determine $y_0$, as
\[
y_0 = \frac{c_0 + \sum_{i=1}^{K-1} (b_i-d) - (m-l+1)}{K} ,
\]
for $y_0$ is counted $K$ times in the sum of $c_0$ and all the
$(b_i-d)$'s, and the sum of the $x_i$'s is $m-l+1$, by
Equation~(\ref{eq-3}).
Given the value of $y_0$, we can then determine all the $x_i$ values,
by using Equation~(\ref{eq-1}) for $x_0$ and 
Equation~(\ref{eq-2}) for $x_1,x_2,\ldots,x_{K-1}$.
Moreover, once we have all these $x_i$ values, we can determine the
values, $y_1,y_2,\ldots, y_{K-1}$, using Equation~(\ref{eq-1}).
Finally, we can determine the values $d'=d+y_0$ and 
$d''=d_{x_0}$ and use these respectively for the role of $d$ in
$Q[l..m]$ and $Q[m+1..r]$.
This gives us all the values necessary to then recursively determine
$Q[l..m]$ and $Q[m+1..r]$.
Of course, if the $c_i$ values for either of these subproblems are all $0$,
except for one (which would be equal to the size of this problem), then there
is no need to recursively solve this problem; so we would not perform a
recursive call in this case.

Let us, therefore, analyze the number of vector guesses performed by
this algorithm.
Ignoring for the time being the initial set of $K$ guesses,
note that we only
continue to search if we are guaranteed to be honing in on a substitution.
Thus, adding back the initial $K$ guesses,
we get that the total number of guesses is at most
\[
s(Q)\lceil \log M\rceil + K.
\]
Thus, we have the following.

\begin{theorem}
\label{thm:deterministic2}
Given an unknown length-$N$ sequence $Q$, defined on an alphabet of
size $K$, with $Q$ having $M$ possible locations of deviation from a
reference sequence, $R$, a malicious Mastermind attacker can discover 
$Q$ in polynomial time using 
$s(Q)\lceil\log M\rceil+K$ guesses, 
each of which reveals only the number of positions
where $Q$ and the test sequence match and
where $s(Q)$ denotes the number of substitutions that 
would transform $R$ into $Q$.
\end{theorem}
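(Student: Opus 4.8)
The plan is to analyze the divide-and-conquer algorithm already described in the text and show that it is both correct and makes the claimed number of guesses. Correctness follows almost directly from the invariant maintained by the recursion: at the start of each subproblem on a range $Q[l..r]$ we know the cardinality $c_i$ of every color in that range as well as the count $d$ of color-$0$ positions outside the range. First I would verify that the linear system (\ref{eq-1})--(\ref{eq-3}) has the stated unique solution, i.e.\ that $y_0$ is recoverable from the $K-1$ queries plus the known cardinalities, and that from $y_0$ every $x_i$ (via (\ref{eq-2}) and (\ref{eq-3})) and then every $y_i$ (via (\ref{eq-1})) is determined. This establishes that the setup for the two children $Q[l..m]$ and $Q[m+1..r]$ can be reconstructed, so by induction on the interval length the algorithm terminates with the full identity of $Q$; the base cases are $K\le 1$ and the case where all but one $c_i$ vanish, both of which require no further queries.

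Next I would bound the number of guesses. The initial phase costs $K$ guesses: one query with the reference $R$ (which, via $s(Q)=N-b(Q,R)$, gives $s(Q)$) and $K-1$ monochromatic offset queries giving the global cardinalities $c_0,\dots,c_{K-1}$. For the recursive phase, the key observation — which the text flags as the crux — is that we only recurse on a subinterval if it still contains at least one non-zero offset, i.e.\ at least one of the $s(Q)$ substitutions. So I would set up a charging argument: associate the recursion tree with the at most $s(Q)$ substitution positions, argue that along any root-to-leaf path the interval length at least halves each step so the depth is at most $\lceil\log M\rceil$ (we work on the compressed vector of $M$ candidate positions, not all $N$), and that at each internal node we spend $K-1$ queries. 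Summing, each of the $s(Q)$ ``active'' leaves is reached by a path of length $\le\lceil\log M\rceil$, and pruning dead subintervals means the total query count is $s(Q)\lceil\log M\rceil$ in the recursive phase, for a grand total of $s(Q)\lceil\log M\rceil+K$, which matches the claim.

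The main obstacle is making the charging argument airtight: one must be careful that the bound is $s(Q)\lceil\log M\rceil$ and not, say, proportional to the number of internal nodes of the full recursion tree (which could be $\Theta(M)$ if we blindly recursed on everything). The pruning rule — do not recurse when the $c_i$ are all zero except one — is exactly what prevents this, and I would phrase the accounting so that each query is attributed to a specific substitution lying in the current interval, with each substitution charged at most $\lceil\log M\rceil$ times because the interval containing it shrinks geometrically. A secondary technical point is reconciling ``$\lceil\log M\rceil$'' with the fact that the recursion is described as starting on $Q[0..N-1]$: since only $M$ positions can ever carry a non-zero offset and all others are permanently color $0$, the effective search space has size $M$, so after restricting to those $M$ positions the depth bound is $\lceil\log M\rceil$ rather than $\lceil\log N\rceil$. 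Finally I would note that every query and every arithmetic reconstruction is polynomial-time, so the whole attack runs in polynomial time, completing the proof.
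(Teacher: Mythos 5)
Your overall structure mirrors the paper's: the same initial $K$ guesses (one for $R$, then $K-1$ constant-offset guesses restricted to the $M$ candidate positions), the same linear-system reconstruction of the child cardinalities, and the same pruning rule. The correctness half of your plan is fine. The gap is in the query count. You state that ``at each internal node we spend $K-1$ queries'' and then conclude that the recursive phase costs $s(Q)\lceil\log M\rceil$. Those two claims are incompatible: with $K-1$ queries per split node, your own charging scheme assigns up to $K-1$ charges per level to a substitution (consider a node whose interval contains exactly one substitution --- all $K-1$ queries at that node must be charged to it), so each substitution can absorb $(K-1)\lceil\log M\rceil$ charges and the total you actually get is $(K-1)\,s(Q)\lceil\log M\rceil+K$, not $s(Q)\lceil\log M\rceil+K$. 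For isolated substitutions this is not slack in the analysis but the true cost of the algorithm as you describe it, so for $K>2$ (e.g., $K=4$ for DNA) the stated bound does not follow.

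The missing idea is the per-subproblem alphabet reduction that the paper builds into the generic recursive step: in each range $Q[l..r]$ one assumes every retained color has $c_i>0$, i.e., colors absent from that range are dropped before splitting, so a split costs only (number of colors present in the range) $-\,1$ queries, not $K-1$. Since every non-reference color present in a range forces at least one substitution inside it, this cost is at most the number of substitutions in the range; the ranges split at any fixed recursion depth are disjoint, so each of the at most $\lceil\log M\rceil$ levels (on the compressed length-$M$ vector) costs at most $s(Q)$ queries in total, yielding $s(Q)\lceil\log M\rceil$ for the recursion and $s(Q)\lceil\log M\rceil+K$ overall. With this refinement your charging argument can be made airtight (charge each query at a node to a distinct substitution in that node's interval); without it, the theorem's constant is simply not achieved by the accounting you propose.
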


As we note in Section~\ref{sec:experiments}, this performance is more
than adequate to show that nearly half of all
mitochondrial DNA data in our case study
are vulnerable to this version of the
Mastermind attack.
Before we provide those statistics, however, let us study how the
Mastermind attack with sequence-alignment queries can be streamlined
to exploit DNA data distributions.

\subsection{The Sequence-Alignment Case}
\label{sec:align-R}
As mentioned above, roughly half
of the sequences in the mitochondrial DNA data set include
insertions and/or deletions in addition to substitutions in the
reference sequence, $R$.
Thus, we discuss in this subsection how we can modify the
Mastermind attack algorithm of Section~\ref{sec:align} to take
advantage of the distributional properties common in genomic data
sets, so as to discover a query sequence that can have arbitrary
kinds of differences with the reference sequence, $R$.
In this case, we view differences with $R$ procedurally 
as \emph{events}, each of
which is either a singleton deletion,
or an arbitrary-length insertion, which would transform $R$ into the
query sequence, $Q$.
(Note: for this algorithm, we view a substitution as actually occurring as a
deletion event followed by an insertion event.)

In this case, we run the attack algorithm in two phases. In Phase~1,
we aim to discover all the deletion events, and in
Phase~2, we aim to discover all the insertion events.
In both phases, we make the simplifying assumption that 
insertion and deletion events are disjoint. That is, they don't
overlap or interfere with one another. This assumption is based on the fact that
these events come from a statistical characterization of genomic
sequences, which is designed to
keep events disjoint (for overlapping events are better subdivided
further and considered as separate sub-events).
So, for example, we assume that there is no insertion event that is
then followed by a deletion event that then removes part of the
sequence that was just inserted. 

We begin by performing a guess for the reference sequence, $R$.
Armed with the sequence-alignment score, $a(Q,R)$, for $R$,
we then perform a divide-and-conquer
computation to find all the deletion 
events that occur in going from $R$ to $Q$.
Note that if we next perform a guess $V$ for a collection of deletion events
at some subset of
the $M$ statistically possible (deletion) locations in $R$, then we can
detect how many deletions actually occurred at these locations.
Moreover, note that the insertion events don't change this score, since the
insertions and deletions do not interfere, by assumption.
For each deletion event that is present in one of the queried locations, 
then our score
will not change with respect to the score for $R$, and,
for each location that should not be deleted, we will record a
score for $V$ that is one worse than that for $R$.
Thus, we can determine the number of deletion events for any test we
do by the difference between the score we observe and 
the score we would expect if all of the deletions are removing 
actual matches.
That is, if we test for $r$ singleton deletion events 
in $V$, then the number that actually occur is
$a(Q,V)-(a(Q,R)-r)$, where $a$ is the sequence-alignment score
function.

Let $Z_{1,M}=\{z_1,z_2,\ldots,z_M\}$ be a set of Boolean variables, such that
$z_i$ is 1 if and only if the $i$th statistically possible deletion event in
$R$ actually occurs in going from $R$ to $Q$.
We can perform a divide-and-conquer search in $Z_{1,M}$ to 
determine which of the
$z_i$'s are 1.
We begin by testing for all the deletion events in $Z_{1,M}$.
This gives us the number of $1$'s in $Z_{1,M}$.
We then perform a test for every deletion event in 
$Z_{1,M/2}=\{z_1,\ldots,z_{M/2}\}$, which by deduction gives us the number in
$Z_{M/2+1,M}=\{z_{M/2+1},\ldots,z_{M}\}$.
We then recursively determine the number in either or both of these two sets so 
long as there is at least one deletion event in that set.
Thus, we perform a divide-and-conquer parallel ``binary'' search
for each of the exact locations of singleton deletions.
Once we have completed this computation for $R$, with queries against $Q$,
we will have
determined the locations of all the deletion events from
$R$ to $Q$, including those deletions that are really substitution events.
Thus, this set of guesses 
uses at most $1+d(Q)\lceil\log M\rceil$
tests, where $d(Q)$ is the set of (singleton) deletion events in
going from $R$ to $Q$.

Once we know the locations of all the deletions in going from $R$ to $Q$,
we perform a second set of binary searches, just among these
locations, to find the locations among this group 
that are actually the sites of substitution events.
Let us now define $R'$ to be the reference sequence resulting from 
performing the events we discovered in Phase~1.
In particular, we perform a binary search for each of the $K$ colors,
with respect to $R'$,
searching, for each color $i$, in the statistically
possible insertion locations in $R'$
where we improve our score by adding a single character
of color $i$.
Note that there may be more than a single character of color $i$ inserted at
this location, but it is sufficient to do a single character query to
determine that there is an insertion here, since there is a non-deleted
element between every possible insertion location in $R'$.

Since we continue to perform recursive binary-type searches 
for any insertion locations that actually cause insertions, then the 
the set of additional guesses we do in this part of the second phase is 
at most $K + e(Q)\lceil\log M\rceil$,
where $e(Q)$ is the number of insertion events.

At this point 
in the algorithm, we know where all the insertion events are located, but we
don't know the full extent of each of their sizes.
So for each location, we perform a set of $K$ guesses of length $2$ to see if
we get a higher score by considering a longer insertion.
If there are no differences from the singleton queries, then we
can infer the length of the insertion from the previous queries.
Otherwise, we perform a set of $K$ guesses of length $3$, $4$, and so on,
until we observe no change from the previous set of guesses.
Thus, with a total number of guesses equal to $K\varepsilon(Q)$, where
$\varepsilon(Q)$ is the total size of all the insertion events, we discover
the length of each insertion event.
To complete the computation, then, we perform a miniature version of
our algorithm from Section~\ref{sec:align} at each
location determined to be to site of an insertion event.
Each such computation requires $(m+1)K$ guesses, where
$m$ is the length of the insertion.
Thus, the total number of guesses made in this part of Phase~2 is
$(\varepsilon(Q)+1)K$.
Therefore, we have the following.

\begin{theorem}
\label{thm:align2}
Given an unknown length-$N$ sequence $Q$, defined on an alphabet of
size $K$, with $Q$ having $M$ possible locations of deviation from a
reference sequence, $R$, a malicious Mastermind attacker can discover 
$Q$ in polynomial time using 
$(d(Q)+e(Q))\lceil\log M\rceil + (\varepsilon(Q)+2)K + 1$
guesses, each of which reveals only the number of positions
where $Q$ and the test sequence match,
using sequence-alignment LCS tests, where
\begin{itemize}
\item
$d(Q)$ is the number of deletion events,
\item
$e(Q)$ is the number of insertion events,
\item
$\varepsilon(Q)$ is the total length of all insertion events.
\end{itemize}
\end{theorem}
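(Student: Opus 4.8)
The plan is to follow the explicit two-phase divide-and-conquer attack sketched above, organizing the proof around (i) a correctness argument that leans entirely on the disjointness hypothesis for insertion and deletion events, and (ii) a per-phase query count that aggregates to the stated bound. I would begin by isolating the single preliminary guess of the reference sequence $R$, which yields $a(Q,R)$ and costs one query; this accounts for the ``$+1$'' term and sets the baseline against which all later scores are interpreted.

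For Phase~1 (deletions), the key lemma to prove is that if $V$ encodes a set of $r$ candidate singleton deletions applied to $R$ at statistically possible sites, and the insertion and deletion events of $Q$ are disjoint, then $a(Q,V)$ equals $a(Q,R)$ reduced by exactly the number of those $r$ probed sites that are \emph{not} genuine deletions of $Q$; hence a single query reveals exactly how many of the probed sites are real. I would then argue that a parallel binary search over the $M$ candidate deletion sites identifies all genuine ones: at each of the $\lceil\log M\rceil$ recursion levels, a subinterval is explored only once it is known to contain at least one genuine deletion, so at most $d(Q)$ subintervals are active per level, giving $d(Q)\lceil\log M\rceil$ queries. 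Substitution events are detected here as deletions, which is exactly why they must reappear in Phase~2.

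For Phase~2 (insertions) I would work with the sequence $R'$ obtained by applying the Phase~1 deletions, and split the argument into three small claims. (a) For each of the $K$ colors, a parallel binary search over the $M$ candidate insertion sites in $R'$ using single-character inserts locates every site where an insertion begins; single-character probes suffice because $R'$ retains a surviving (non-deleted) character between consecutive candidate sites, so distinct probes do not interfere. This costs the $K$ root queries plus $e(Q)\lceil\log M\rceil$ for the active subintervals. (b) At each located site, probing with inserts of length $2,3,\dots$ until the score stops improving pins down the total insertion length, using $K$ queries per length increment and therefore $O(K\varepsilon(Q))$ queries overall. (c) At each site, a local instance of the Section~\ref{sec:align} algorithm (Theorem~\ref{thm:seq-align-q}) recovers the actual inserted substring using $(m+1)K$ queries for an insertion of length $m$, which sums over the events to $O(K\varepsilon(Q))$. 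Consolidating the $O(K\varepsilon(Q))$ terms from (b) and (c) with the $K$ root queries of (a) into the form $(\varepsilon(Q)+2)K$, and adding the Phase~1 and preliminary costs, yields the claimed total $(d(Q)+e(Q))\lceil\log M\rceil + (\varepsilon(Q)+2)K + 1$; each step is plainly polynomial time.

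The main obstacle I expect is the correctness bookkeeping in Phase~2, not the counting: one must verify that after Phase~1 the ``local'' picture at each insertion site really is an \emph{independent} LCS-Mastermind instance — i.e., that disjointness together with a surviving anchor character on each side of a candidate site makes the observed score decompose as the $R'$-baseline plus a purely local contribution — and that the length-probing step (b) terminates correctly even when two insertion events are close together. Making the disjointness hypothesis do exactly this much work, and no more, is the delicate part; once the decomposition lemmas are in place, the per-phase query bounds telescope to the stated expression routinely.
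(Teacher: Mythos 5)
Your proposal follows essentially the same route as the paper's own argument: a single reference query for $a(Q,R)$, a parallel binary search for deletion sites driven by the lemma that a probe of $r$ candidate deletions loses exactly one unit of score per non-genuine site, per-color binary searches over the candidate insertion sites in $R'$ justified by the surviving anchor characters, and then length probing plus a local run of the Section~\ref{sec:align} algorithm at each insertion site, summed to the stated bound. Your consolidation of the Phase-2 costs into $(\varepsilon(Q)+2)K$ is no looser than the paper's own bookkeeping (which likewise describes $K\varepsilon(Q)$ length probes and $(\varepsilon(Q)+1)K$-guess local reconstructions before stating $(\varepsilon(Q)+2)K$ in the theorem), so the proposal matches the published proof in both structure and level of rigor.
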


\section{Case Study for Mitochondrial DNA}
\label{sec:experiments}
We are at the point where hundreds of
thousands of people have had
their mitochondrial DNA (mtDNA) sequenced~\cite{ps-mdhe-05,brb-gppd-05},
which is typically about 16,500 base pairs (bp) long, 
whereas the entire diploid human genome is roughly 6 billion bp long.
Interestingly, since mtDNA is transferred only along the maternal line,
scientists have used differences from a reference mtDNA sequence
as a way to plot human migration from the earliest days of
the modern human species.
(See Figure~\ref{fig:migration}.)

\begin{figure}[hbt]
\begin{center}
\includegraphics[width=3.1in]{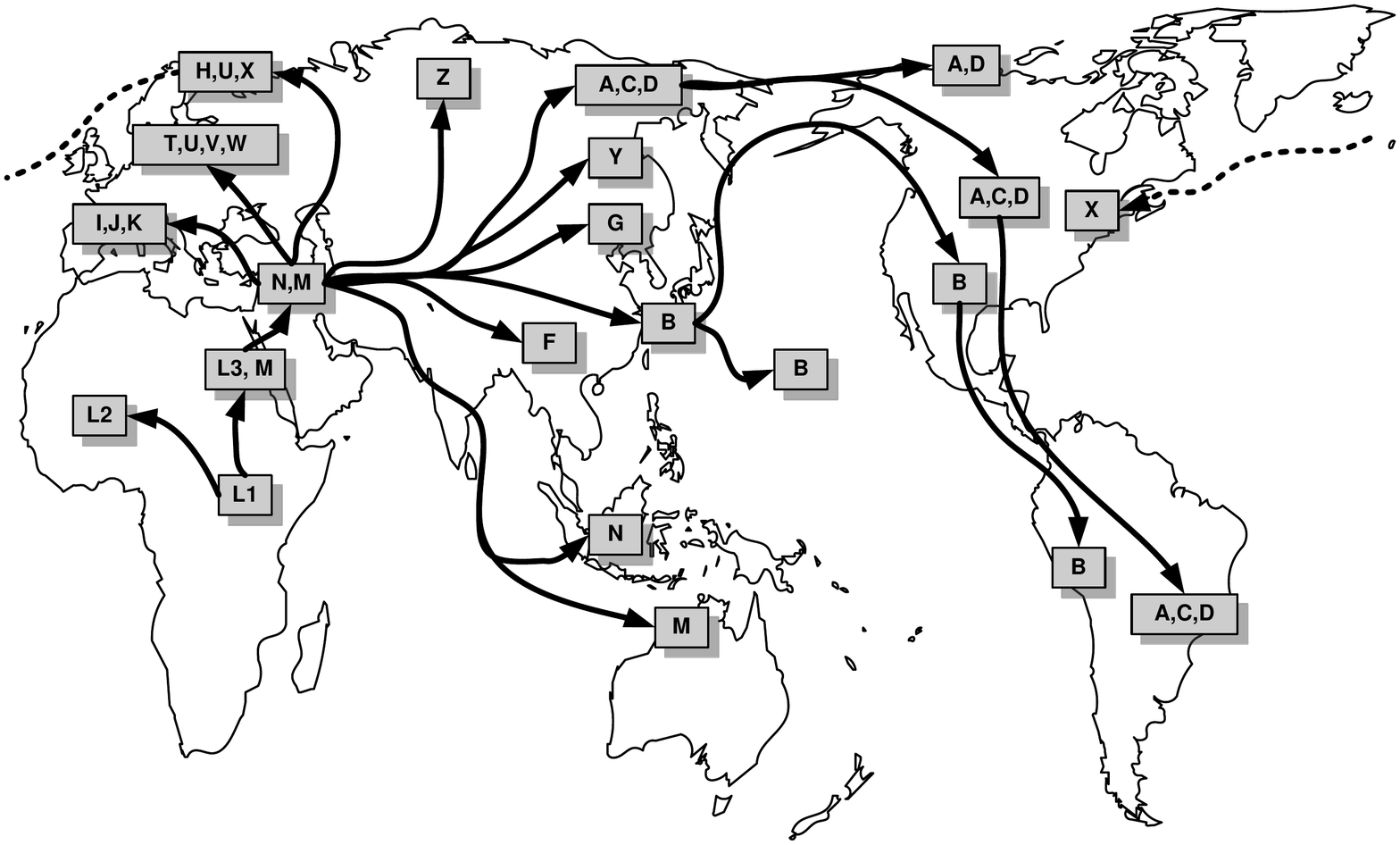}
\vspace*{-26pt}
\caption{
A confluent illustration~\cite{degm-cdvnp-03} of
the pattern of human migration implied by 
mtDNA mutations~\cite{ps-mdhe-05,brb-gppd-05}.
Each letter stands for a major human mitochondrial haplogroup, that
is, a canonical set of genetic mutations from a common ancestor.
}
\label{fig:migration}
\end{center}
\end{figure}

Because of this knowledge of migration patterns and its correlation to 
known mtDNA mutations, given someone's mtDNA sequence, 
it is possible to trace their maternal ancestry back to individual
villages~\cite{brb-gppd-05}, 
just by identifying differences in their mtDNA to a reference sequence,
e.g., rCRS (see Figure~\ref{fig:rCRS}).
In other words, mtDNA alone is sufficient to determine 
a person's ethnic background
with incredible accuracy.
Thus, we are at a point where privacy is a real concern with
respect to genomic sequences, and this concern is sure to increase in
the future.

\begin{figure}[hbt]
\medskip
\begin{center}
\begin{minipage}{2.25in}
\begin{verbatim}
GATCACAGGTCTATCACCCTATTAA
CCACTCACGGGAGCTCTCCATGCAT
TTGGTATTTTCGTCTGGGGGGTATG
CACGCGATAGCATTGCGAGACGCTG
GAGCCGGAGCACCCTATGTCGCAGT
ATCTGTCTTTGATTCCTGCCTCATC
...
ATCTGGTTCCTACTTCAGGGTCATA
AAGCCTAAATAGCCCACACGTTCCC
CTTAAATAAGACATCACGATG
\end{verbatim}
\end{minipage}
\end{center}
\caption{
A portion of the Revised Cambridge Reference Sequence,
rCRS (GenBank accession number: AC 000021), which
is 16,568 bp long.
}
\label{fig:rCRS}
\end{figure}

In addition to ethnicity, there are, of course, other privacy
concerns with respect to genomic data, including sensitive
information related to disease susceptibility, and possible genetic
influences on sexual orientation, personality, addiction, and intelligence.
Concerns that employers or insurers will use genetic information to screen 
those at high risk for a disease are already a public concern
and stories involving such risks are widespread in the press.
Indeed, the U.S.~government and 
several states have already created laws dealing with DNA data access, 
and many more are considering such legislation.
Thus, there is a need for technologies that can safeguard the privacy 
and security of genomic data.

Fortunately, several researchers have started
exploring privacy-preserving data querying
methods that can be applied to
genomic sequences (e.g., see~\cite{akd-spsc-03,da-smc-01,FNP04}).
That is, cryptographic techniques 
can be used to allow for queries to be 
performed in a way that answers the specific
question---such as a score rating the quality of a query
for DNA matching or sequence alignment---but does not
reveal any other information about the data, such as race or disease risk of
the individual whose DNA is being queried.

The purpose of this case study is to show that, while being sufficient for 
single-shot comparisons of DNA sequences,
such cryptographic techniques have a weakness
when they are employed repeatedly.
Specifically, we explore in this section 
how the Mastermind attack
allows a genomic querier, Bob, to iteratively discover 
the full identity of a genomic
query sequence, $Q$, with surprising efficiency, 
even if each comparison of $Q$ with
Bob's sequences are done using cryptographic
privacy-preserving protocols.
It is not surprising that iterated privacy-preserving sequence comparisons leak
some information about the sequences being compared; what is surprising is how 
quickly the Mastermind attack can work, especially on genomic data.

To demonstrate the vulnerability of real-world DNA data
to the Mastermind attack, we have performed
a case study of our distribution-based Mastermind attack
algorithms.
We used 1000 human mitochondrial
sequences downloaded from a recent version of GenBank
(http://www.ncbi.nlm.nih.gov/Genbank/index.html).
We focused on the sequences alone, ignoring any header and
other information, and have simulated Mastermind attacks on each one. 
The Revised Cambridge Reference Sequence
(rCRS) (GenBank accession number: AC 000021) was also downloaded
and used as the reference sequence~\cite{brandon08,ruiz07,brandon04}. 
The reference sequence is 16,568 bp long. All the sequences were aligned to 
the reference sequence and, for each sequence, the indices of the location
of each variation
were recorded together with the type (substitution, insertion, deletion) and
content of each variation.
This step is also essential if one is interested in compressing the 
data~\cite{baldicompression07}, for example.
Statistics for the number of substitutions, deletions, and insertions for 
this data
set of 1000 mtDNA sequences is given in Table~\ref{tbl:stat1}.

\begin{table}[hbt]
\begin{center}
\begin{tabular}{l|cc}
& \textbf{mean} & \textbf{standard dev.} \\
\hline
\rule{0pt}{12pt}\textbf{Substitutions} & 28.00 & 18.38 \\
\textbf{Deletions} & 0.90 & 2.46 \\
\textbf{Insertions} & 0.95 & 1.10 \\
\hline
\end{tabular}
\end{center}

\caption{Frequency statistics for 1000 mtDNA sequences. Mean and standard
deviation statistics are given for the frequency of substitutions, deletions,
and insertions in going from 
the reference sequence, $R=\mbox{rCRS}$,
to each sampled sequence.}
\label{tbl:stat1}
\end{table}

Of the 1000 sequences, 453 
have only substitution events with respect to the reference sequence,
$R=\mbox{rCRS}$.
So we used this subset of 453 sequences to test the 
simulated performance of the method
of Theorem~\ref{thm:deterministic2}.
The distribution of the number of substitutions in each of these sequences is
shown in Figure~\ref{fig:subs}.

\begin{figure}[hbt]
\begin{center}
\includegraphics[width=3in]{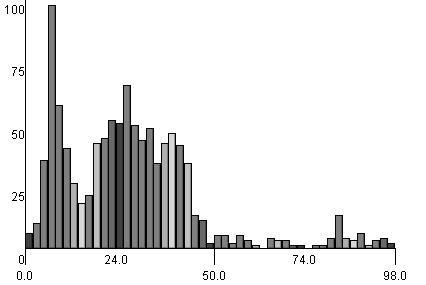}
\caption{Histogram of number of substitutions in 
1000 mtDNA with respect to the reference sequence, $R=\mbox{rCRS}$.}
\label{fig:subs}
\end{center}
\end{figure}

Note that these frequencies do not follow a normal distribution, which shows
the importance of our using real-world data, such as this, rather than
randomly-generated or simulated data.
The statistical diversity of the mtDNA data is actually a 
reflection of the racial diversity of the people whose mtDNA data is included
in our data set.
That is,
edit distance from the reference sequence, $R=\mbox{rCRS}$, 
across the human species, 
is not uniformly or normally distributed.
Instead, edit distance from rCRS is a reflection of human migration patterns,
as illustrated in Figure~\ref{fig:migration}.

The 45.3\% of the sampled mtDNA sequences with substitution-only modifications 
from rCRS
are exactly the set of sequences that can be effectively discovered by the
single-count Mastermind attack of Theorem~\ref{thm:deterministic2}.
Thus, we simulated the performance of this attack on each one of these
sequences and tabulated the number of guesses that would be needed in each
case in order to discover the complete identity of each sequence.
Interestingly, 90\% of the simulated
substitution-only Mastermind attacks completed with 375
guesses or less.
The complete distribution of single-count
Mastermind attack lengths for this data set are
shown
in Figure~\ref{fig:subonly}.

\begin{figure}[hbt]
\begin{center}
\includegraphics[width=3in]{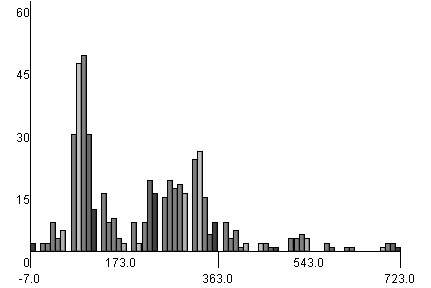}
\caption{Histogram of Mastermind attack lengths
for 453 substitution-only mtDNA sequences with 
standard single-count Mastermind scores.
The mean attack length for this data set was
219.6 and the standard deviation was 139.1.
}
\label{fig:subonly}
\end{center}
\end{figure}

All 1000 sampled mtDNA 
sequences were then used to test the performance of the method of
Theorem~\ref{thm:align2}.
Sequence-alignment Mastermind attacks were
simulated for each such mtDNA sequence while the number of sequence-alignment
tests were counted for each.
Interestingly, 90\% of these simulated
subsequence-alignment Mastermind attacks completed with 875 
guesses or less.
And some completed with much fewer than this.
The complete distribution of sequence-alignment Mastermind attack lengths for
this data set is shown in Figure~\ref{fig:subseq}.

\begin{figure}[hbt]
\begin{center}
\includegraphics[width=3in]{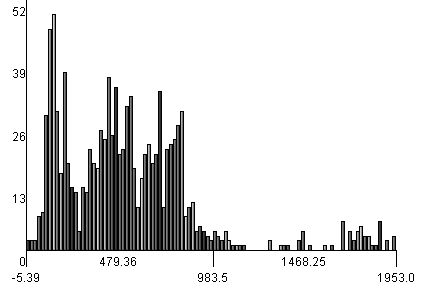}
\caption{Histogram of simulated Mastermind attack lengths
for 1000 mtDNA sequences with 
sequence-alignment scores.
The mean sequence-alignment simulated Mastermind attack length was
536.3 with a standard deviation of 373.9.
}
\label{fig:subseq}
\end{center}
\end{figure}

\section{Discussion and Future Directions}
We have shown that, even though the single-count and sequence-alignment 
Mastermind satisfiability problems are NP-complete, 
one can effectively mount Mastermind attacks on arbitrary genomic sequences
just by knowing basic information about the length of the sequences and the
number of characters in the alphabet used to construct those sequences.
Moreover, if one has some basic
statistical information about these sequences, relative to a reference sequence,
then one can mount the Mastermind attack with surprising effectiveness.
In fact, we provided a case study suggesting
that such attacks are already possible
and surprisingly efficient for mtDNA sequences.

One conclusion to draw from this work
is that privacy-preserving protocols for performing
a query with a sequence, $Q$,
against a genomic database, $D$,
should take into account the entire 
set of comparisons~\cite{da-psrda-01}, 
with $Q$ and the sequences in $D$, rather than relying on the
privacy-preservation of each individual comparison in turn.
For example, in the usage model where Bob is a user querying a genomic 
database, the Mastermind attack is weakened if it is difficult
for Bob to know the index of the sequences he is comparing against---for
example, if the database owner, Alice, presents her sequences in a different
random order each time. Such an obfuscation does not defeat the Mastermind
attack, however, if Bob is able to use other reasoning inferences to match
scores of his query sequences across multiple queries in Alice's database
of sequences.

In terms of further exploration of the vulnerability of genomic data to the
Mastermind attack, one interesting direction for future work would be to
test the vulnerability of entire human genomes to the Mastermind attack, once
we have enough completed genomes to do such an experimental study.
In addition, other
directions for future research therefore could include new, efficient
privacy-preserving schemes for querying entire genomic databases with
respect to sequence-alignment queries.
Such results would negate the privacy-exposing vulnerabilities of the
Mastermind attack.

\subsection*{Acknowledgments}
We would like to thank Pierre Baldi 
for suggesting the security of genomic data as an important research question
and for providing the mitochondrial DNA data used in our experiments, including
the characterizations in terms of the reference sequence, rCRS.
We would also like to thank 
David Eppstein,
Daniel Hirschberg,
Stas Jarecki,
and
Michael Nelson
for helpful discussions regarding the topics of this paper.
This research was supported in part by the National Science
Foundation under grants 0724806, 0713046, and 0847968.
Some of the results of
this paper appeared in preliminary
form as~\cite{g-magd-09}, albeit with some flawed arguments for justifying
previous versions of Theorems~\ref{thm:seq-align-q}
and~\ref{thm:align2}.

\ifArxiv
{\raggedright
\bibliographystyle{abbrv}
\bibliography{goodrich,baldi,info,ref,genome,math,security}
}
\else
{\raggedright
\bibliographystyle{IEEEtran}
\bibliography{goodrich,baldi,info,ref,genome,math,security}
}

\begin{IEEEbiography}[{\includegraphics[width=1in,keepaspectratio]{goodrich2008.png} }]%
{Michael T. Goodrich}
is a Fellow of the IEEE and
Chancellor's Professor at the University of California,
Irvine, where he has been a faculty member in the Department of
Computer Science since 2001.
He received his B.A. in Mathematics and Computer Science
from Calvin College in 1983 and his PhD in Computer Sciences from
Purdue University in 1987, and he worked as
a professor in the Department of Computer Science at
Johns Hopkins University from 1987-2001. His research is
directed at algorithms for solving large-scale problems motivated
from
information assurance and security, the Internet, information
visualization, and geometric computing.
\end{IEEEbiography}
\fi

\end{document}